\documentclass{article}
\usepackage[utf8]{inputenc}
\usepackage{graphicx}
\usepackage{amsmath,amssymb}
\usepackage{arydshln}
\usepackage{booktabs}
\usepackage{verbatim}
\usepackage{enumitem}
\usepackage{bm}
\usepackage{wasysym}
\usepackage{esint}
\usepackage{xcolor}
\usepackage{tikz}
\usepackage{times} 
\usepackage{lmodern} 
\usepackage{hanging}
\usetikzlibrary{shapes,backgrounds,arrows,chains,matrix,positioning,scopes}
\usepackage{caption}
\usepackage{titlesec}
\usepackage{amsthm}
\newtheorem{theorem}{Theorem}[section]

\newtheorem{Definition}[theorem]{Definition}
\usepackage{geometry}
 \geometry{
 a4paper,
 total={170mm,257mm},
 left=20mm,
 top=20mm,
 }

%
%
%
%
%
\date{}

\begin{document}
\begin{table}[ht]
\centering
\begin{tabular}{p{7 cm}p{6 cm}}
\multicolumn{2}{c}{\huge \textbf{Fairly Private Through Group Tagging }}                                                                      \\
\multicolumn{2}{c}{\huge \textbf{  and Relation Impact}} \\
&\\
&\\
&\\
\large  \textbf{Poushali Sengupta }                                 &\large  \textbf{Subhankar Mishra }                                     \\
&\\
 Department of Statistics                           & School of Computer Sciences                           \\
University of Kalyani                              & National Institute of Science, Education and Research \\
Kalyani, West Bengal                               & Bhubaneswar, India -752050                            \\
India - 741235                                     & Homi Bhabha National Institute,                       \\
 \multicolumn{1}{l}{}                               & Anushaktinagar, Mumbai - 400094, India                \\
 \multicolumn{1}{l}{tua.poushalisengupta@gmail.com} & smishra@niser.ac.in                                   \\
 \multicolumn{1}{l}{ORCID: 0000-0002-6974-5794}     & ORCID: 0000-0002-9910-7291                          
\end{tabular}
\end{table}

\begin{abstract}

Privacy and Fairness both are very important nowadays. For most of the cases in the online service providing system, users have to share their personal information with the organizations. In return, the clients not only demand a high privacy guarantee to their sensitive data but also expected to be treated fairly irrespective of their age, gender, religion, race, skin color, or other sensitive protected attributes. Our work introduces a novel architecture that is balanced among the privacy-utility-fairness trade-off. The proposed mechanism applies \textit{Group Tagging Method } and  \textit{Fairly Iterative Shuffling} \textit{(FIS)} that amplifies the privacy through random shuffling and prevents linkage attack. The algorithm introduces a fair classification problem by \textit{Relation Impact} based on \textit{Equalized Minimal FPR-FNR} among the protected tagged group. For the count report generation, the aggregator uses TF-IDF to add noise for providing longitudinal Differential Privacy guarantee. Lastly, the mechanism boosts the utility through risk minimization function and obtain the optimal privacy-utility budget of the system. In our work, we have done a case study on gender equality in the admission system and helps to obtain a satisfying result which implies that the proposed architecture achieves the group fairness and optimal privacy-utility trade-off for both the numerical and decision making Queries.

\end{abstract}
\section{Introduction}\label{intro}
To deal with information leakage \cite{b1} \cite{b2}, nowadays the organizations and industries have all concentration on the "privacy" of the dataset which is the most important and necessary thing to protect the dataset from some unexpected attacks by the attackers that causes data breach.
The main aim of a statistical disclosure control is to protect the "privacy" of "individuals" in a database from the sudden attack. To solve this problem, in 2006, Dwork introduces the notion of "Differential Privacy"(DP) \cite{b3}, \cite{b4} \cite{b5}. Some of the important works on DP have been summarized in table \ref{tab:dp}. 

\begin{table}[ht]
\begin{tabular}{p{0.26\textwidth}lp{0.6\textwidth}} \toprule
\textbf{Paper}                                                        & \textbf{Year} & \textbf{Brief Summary}                                                                                           \\ \midrule
Differential Privacy \cite{b5} & 2006          & Initial work on DP                                                                                               \\
RAPPOR \cite{b7}                         & 2014          & Fastest implementation of LDP                                                                                    \\
Deep Learning with Differential Privacy\cite{b8} & 2016 & Introduces a refined analysis of privacy costs within DP\\
Communication-Efficient Learning of Deep Networks from Decentralized Data \cite{b9} & 2016 &  Introduces practical method for the federated learning of deep networks based on iterative model averaging\\

PROCHLO \cite{b10}                     & 2017          & Encode Shuffle and Analyse (ESA) for strong CDP                                                                  \\
 Adding DP to Iterative Training Procedures \cite{b11} & 2018& Introduces an algorithm with the modular approach to minimize the changes on training algorithm, provides a variety of configuration strategies for the privacy mechanism.\\
Amplification by Shuffling \cite{b12}     & 2018          & $\mathcal{O}(\epsilon \sqrt{\log{\frac{1}{\delta}/ n}, \delta})$ centralized differential privacy(CDP) guarantee \\
Controlled Noise \cite{b13}       & 2019          &    duel algorithm based on average consensus (AC)                     \\ 
Capacity Bounded DP \cite{b14} & 2019 &  Location Privacy with matrix factorization\\
ARA \cite{b15} & 2020 & Aggregration of privacy reports \& fastest CDP.\\
Privacy at Scale\cite{b16} & 2020 & Privatization, ingestion and aggregation \\
ESA Revisited \cite{b35} & 2020 & Improved \cite{b12} with longitudinal privacy \\ 
BUDS\cite{b17} & 2020 & Optimize the privacy-utility trade-off by Iterative Shuffling and Risk Minimizer.\\
\hline                                                          \end{tabular}
\caption{Related works on Differential Privacy}
\label{tab:dp}
\end{table}

 
Adding to the importance of trade-off between privacy-Utility guarantee to the sensitive data, the demand for fairness also belongs to the clients' top priority list. Nowadays, in most online service systems, users not only ask for high privacy but also expect to be treated fairly irrespective of religion, skin color, native language, gender, cast, and other sensitive classes. Table \ref{tab:fair} shows some existing algorithms on fairness mechanism.

\begin{table}[ht]
\begin{tabular}{p{0.26\textwidth}lp{0.6\textwidth}} \toprule
\textbf{Paper}                                                        & \textbf{Year} & \textbf{Brief Summary}                                                                                           \\ \midrule

Fairness Beyond  Disparate  Treatment  Disparate  Impact \cite{b19}   & 2017         & Introduces Disparate Mistreatment for group fairness. \\

Algorithmic decision making and the cost of fairness \cite{b20}& 2017 &  Deals with a tension between constraint and unconstrained algorithms.\\
On  Fairness and  Calibration\cite{b21} & 2017 &  Explore the stress between minimizing error discrepancy across different population groups while maintaining calibrated probability estimates.\\

An  Information-Theoretic  Perspective Fairness and Accuracy \cite{b22}& 2019 & Provide trade-off between fairness and accuracy.\\
Fairness  Through the  Lens of  Proportional  Equality \cite{b23} & 2019 & Measure the fairness of a  classifier and provide group fairness through proportional equality \\
FACT: A Diagnostic for Group Fairness Trade-offs \cite{b24}& 2020 & Enable systematic characterization of performance-fairness trade-offs among the group. \\
Fair  Learning with  Private Demographic  Data \cite{b25} & 2020 & allows individuals to release their sensitive information privately while allowing the downstream entity to learn predictors which are non-discriminatory giving theoretical guarantees on fairness. \\
Fairness,  Equality,  and  Power in  Algorithmic Decision-Making \cite{b26} & 2021 &  Focuses on inequality and the causal impact of algorithms and the distribution of power between and within groups. \\
\hline         
\end{tabular}
\caption{Related works on Fairness}
\label{tab:fair}
\end{table}
 
In their 2012 work \cite{b18} Dwork et al. have explicitly shown that individual fairness is a generalization of $\epsilon-$differential privacy. There is a distinction in fair ML research between ‘group’ \& ‘individual’ fairness measures. Much existing research assumes that both have importance, but conflicting.
\cite{b34} argues that this seeming conflict is based on a
misconception and shows that individual and group fairness are not fundamentally in conflict.\cite{b28,b29,b30,b31,b32,b33} are some state-of-the-art algorithms on Fairness with Differential Privacy. However, none of them provide either the tight bounds on privacy \& loss function or an optimal trade-off between privacy-utility-fairness. Our proposed mechanism aims to achieve both.\\


In this work, the idea is to introduce a group fairness mechanism along with a differential privacy mechanism with the help of the Fairly Iterative Shuffling technique which provides optimal privacy-utility trade-off. The main aim of this work is to make the balance among privacy-utility-fairness during the process. The contributions of this work are:
\begin{enumerate}
    \item \textit{Group Tagging} method tags every individual by their protected attribute class so that the $FIS$ can not change their protected data during the shuffling. If it will change, the classifier's decision will not be fair as the system wants to generate group fairness among the protected attribute class in the decision-making problem.
    \item $FIS$ itself posses a strong  privacy guarantee to the data that prevents all types of linkage attack
    \item The proposed architecture applies the notion of \textit{Relation Impact} for maintaining group fairness to decide the allowance or giving the chance of participation of the individuals in any particular event without the leakage of their protective attributes.
    \item Here, every individual is treated according to their due rather than only "merit". So, the mechanism not only focuses on the directly related attribute but also considers the highly related background factor which affects the results indirectly. 
    \item \textit{Risk Minimzer} amplifies the utility by minimizing expected loss of the model and helps to achieve the optimal privacy-utility trade-off through regularization parameter.
\end{enumerate}

\section{Methodology}

The goal of work is to develop an unbiased secured mechanism bringing fairness which is also differentially private. To achieve, before fairness is implemented, the collected data must differentially private without hampering the number of persons corresponding to the respected protected groups, i.e the data will be divided into separate groups according to the variety of protected class. \\
Let assume we have a data set with the records of $n$ individuals and the number of dependent attributes is $m$. Among the $n$ individuals, let there be $n_i$ number of individual having $i^{th} ; i =1(1)n_p$ protected attribute. The whole mechanism is divided into seven parts, they are One-hot Encoding of collected data, Query Function, Group tagging, Correlation Matrix, Fairly Iterative Shuffling (\textit{FIS}), 2-Layers Fair Classification through\textit{ Relation Impact }and  Boosting Utility through \textit{Risk Minimizer} \& Count Report Generation.   
\subsection{One-Hot Encoding}
One hot encoding \cite{b17} \cite{b4} is a method of converting categorical variables into a structured form that could be used in ML algorithms to do better performance in prediction. Now, the correct choice of encoding has a great impression on the utility guarantee. In this architecture, the preference is one-hot encoding instead of any other process of encoding to generate a fair \& privatized report with the least noise insertion and provides maximum utility. 
\subsection{Query Function }
In the first phase of this architecture, a query function \cite{b17} is used on the dataset to get the important attributes related to a particular query as shown in table \ref{tab:qf}. Though our proposed DP mechanism works for generalized group fairness problems, we have taken \textbf{Example 2} as our case study and analyzed the case of gender unbiasedness more elaborately in upcoming sections. Also, it is to be noted, This mechanism performs for both (\textbf{Example 1 \& 2}) type of Queries; related to numerical and decision-making problems and always provide the privatized DP report against user's protected attributes.

\begin{table}[ht]
\begin{tabular}{@{}p{0.2\textwidth}p{0.77\textwidth}@{}}
\toprule
\textit{\textbf{Query}} &
  \textit{\textbf{Analysis}} \\ \midrule
\textbf{Example 1}: How many Hindus live in the urban area with the score in maths greater than ninety percent? &
  The attributes 'Religion', 'living area', and 'marks in maths' will be delivered as answers from the query function. This implies that only these three attributes are important for generating the final report to that particular numerical (count-related) query. This type of Question does not need any fair decision but should be privatized against the protected attribute "Religion". \\
\textbf{Example 2}: Which students are eligible the admission based on the marks of maths? &
  Now for this type of question, not only the attribute 'marks of maths' is important, but also the gender of the students matter as we want to develop a fair mechanism. For this, it is necessary to give special importance to the attribute 'sex' in every case which is achievable by the attribute tagging method that is discussed later. For this mechanism, the query function considers the attribute 'sex' as the important attribute and returns it no matter what the question is just to maintain a fair balance between male and female candidates. \\ \bottomrule
\end{tabular}
\caption{Query Function: Assume the database is structured with records of individuals where \textit{name, age, sex,  marks of maths, marks of physics, marks of computer science, annual family income,  living area, religion} are ten attributes.}
\label{tab:qf}
\end{table}

\subsection{Correlation Matrix}

After the query function\cite{b17}, the mechanism calculates a correlation matrix that measures the relation distance of each attribute with the required events. The previously given \textbf{Examples 2} concentrate on the decision that should fair related to gender in the admission system based on the maths score. But in reality, other sensitive background factors are indirectly related to the examination results that are obtained by the students. For example, the economical status, electricity facility, tuition opportunity, etc. It is obvious that the students came from the strong economical background or high-class society usually take more than one tuition or extra classes to improve scores. On the other hand, the poor students can't afford such facilities. Not only that, in some cases, they don't have the proper environment to study like electricity, proper living area, etc. Now, these factors have a great effect on their results and should not be avoided. So, here, the correlation matrix is used to get the highly related background factors that have a great impact on the required event.
\par
For the general case in, having a dataset with $n$ rows and $k$ attributes, the applied query function returns $c$ the number of attributes which are required to generate the answer along with protected attribute (such as skin color, religion, gender, etc) and the correlation matrix returns $d$ number of related background factors that affects the results. 
If $m= c+d$ (excluding protected attribute) number of total attributes are important for generating the final report, then these $m$ attributes will tie up together to represent a single attribute and the reduced number of attributes will be $g=k-m$. Let assume, there are $2S (S>1)$ number of shufflers and if $g$ is divisible by $S$, these attributes are divided into $S$ group with $g/S$ attributes in each group. Now, If $g$ is not divisible by $S$, then extra elements will choose a group randomly without replacement. 
After that, the group tagging method is applied which helps to preserve the individuals' protected attributes after fairly iterative shuffling \cite{b17} method. As the tied-up attributes include the highly related background attribute to the given event, the following group tagging method assists the fair classification in the future.
\subsection{Group Tagging Method}
Group tagging is a method to classify the individuals based on their protected attribute where each tagged group contains $S$ group of attributes each. This process can be happened by tagging each row with their corresponding protected class. If we consider the given \textbf{Example 2} which concentrates on fairness between gender groups (we have taken male and female groups for case study) for admission, the method will work as follow.  If the first row contains a record of a female candidate, then this row is tagged by its gender group 'female'. For the male candidates, the rows are tagged by 'male'. Now rows having the same tagging element will shuffle among each other i.e In the case of female candidates, the iterative shuffling will occur only with the rows tagged as 'female' and on the other hand, rows tagged with 'male' goes for iterative shuffling separately. In this way, the records of male and female candidates never exchange with each other and help to maintain their gender in the generated report. This process never hampers the accuracy of the classifiers for taking a fair decision based on their gender group in the given problem. Also, the iterative shuffle technique establishes a strong privacy guarantee which shuffles all the records of female and male candidates separately among their tagged group. In this way, the probability of belonging the record of a particular individual to their unique ID becomes very low, and the unbiased randomization shuffles technique keeps the data secured. This technique prevents all types of linkage attacks, similarity attacks, background knowledge attacks and also reduces the exposure of user data attributes.
\subsection{Fairly Iterative Shuffling (FIS)}
A randomized mechanism is applied with Fairly Iterative Shuffling which occurs repeatedly to the given data for producing the randomized unbiased report to a particular query.
Let's have a dataset with $n$ rows containing the records of $n$ individuals among which $n_1$ persons are female and $n_2 = n-n_1$ persons are male. That means after the group tagging method, there are exactly $n_1$ rows tagged as 'female' whereas $n_2$ rows are tagged as 'male'. In this stage of our proposed architecture, each row of 'male' and 'female' groups is shuffled  iteratively by the $2S, (S>1)$ number of shufflers. Before shuffling each gender group is divided into some batches containing $S$ attribute groups and every attribute group of each batch chooses a shuffler from the $S$ number of shufflers randomly without replacement for independent shuffling. Now the female tagged group contain $n_1$ individuals and assume that it is divided into $t_F$ batches where $1, 2,...., t_F$ batches have $n_{11}, n_{12}, ...., n_{1t_F}; (n_{11}\simeq n_{12}\simeq ....\simeq n_{1t_F})$ number of rows respectively. Similarly, let assume the male tagged group have $n_2$ individuals' records and it is divided into $1, 2, ...., t_M$ batches having $n_{21}, n_{22}, ...., n_{2t_M}; (n_{21}\simeq n_{22}\simeq ....\simeq n_{2t_M})$ rows respectively. Now for one gender tagged group, each attribute group of every batch chooses a shuffler randomly without replacement from any $S$ number of shufflers. On the other hand, For another gender-tagged group, each group of attributes chooses a shuffler similarly from the remaining $S$ number of shufflers. This shuffling technique occurs repeatedly or iteratively until the last batches of both the gender tagged group goes for independent shuffling. \\

This architecture is applicable in the generalized case also and not only limited to gender unbiasedness problem If there is $i$ number of variety for a protected attribute (skin color, religion, etc), the individuals tagged with their considerable protected class and then divided into $i$ number of groups. Each group divided into batches ($t_i$ number of batches for $ith$ group) with approximately same batch size ($n_i1 \simeq n_i2 \simeq ...\simeq n_{it_i}$) and go for shuffling \cite{b35} \cite{b17}.
\subsection{2-Layers Fair Classification through Relation Impact.}
This mechanism uses $2$ layer classifications. After $FIS$, firstly the classifiers use the group classifications that classify the individuals into gender groups (male and female) from the shuffled data sets. After that, the classifiers make the subgroup-classifications between the groups depends on the Relation Impact (details in section \ref{Fair}).
\subsubsection{Group Classification: }
After $FIS$ the classifiers divided the shuffled data into the number of classes according to their gender. In our problem setting, the mechanism focuses on the gender unbiasedness between males and females. This step is the 1st layer of classification where the classifiers use the tagged gender label to classify the data. As in the group tagging method each individual is tagged with his gender category, the classifiers concentrate on this tagged label for classification. 
This classification is made based on user-given gender information which cannot be affected or changed by the shuffling as it is tied up before going for $IS$. So, the chance of miss classification error is very less (in fact negligible) here.

\subsubsection{Subgroup-Classification}
This is the second layer of classification where the classifier classifies the groups into positive and negative subgroups by the notion of Relation Impact. The classifiers decide the eligibility, i.e if an individual should be allowed to be admitted to the course based on their related background information. For example $2$, the classifiers make the decision based on the student's math score, their economical background, living area (rural/ urban), electricity facility, etc. The positive class of each gender group contains the eligible individuals and the negative class contains the rejected individuals for the admission. 
 \subsection{Risk Minimization and Count Report Generation }
This is an important stage of this architecture where the report is generated from a shuffled and unbiased dataset for a given q
query related to sum or count aggregation i.e.numerical answer (consider the \textbf{Example 1}). The report is generated by average count calculation by an aggregator function that uses TF-IDF\cite{b15} \cite{b17} calculation to add minimum noise to the report. But, before the final report generation, the loss function \cite{b17} is calculated, and risk minimizing assessment \cite{b17} is done to obtain the optimal solution. This optimal solution refers to the situation where the architecture achieves a strong privacy guarantee, as well as maximum utility\cite{b27} measure while providing an unbiased report against a protected class. The details of risk minimization technique is discussed in section \ref{risk}.
 
\section{Analysis}
Considering a data set containing $n$ rows and $g$ attributes and there exists a $iS$ number of shufflers. The attributes are divided into $S$ groups as described previously. Now there is $n_i$ number of individuals tagged as $i^{it}$ protected attribute; $i = 1,2, .. n_p$ where each row of the dataset has the records of individuals corresponding to their crowd ID. The tagged attributes divide the data sets into groups, For example, if there is $i$ number of variety in a protected attribute, tagged individuals will be divided into $i$ groups. Now, Each tagged group are divided into some batches containing approximately the same rows i.e. $1, 2, ..., t_i$ number of batches for group tagged with $i^{th}$;$i=1(1)n_p$  attribute. After that, each group of attribute from every batch chooses a random shuffler and go for shuffling.
\subsection{FIS Randomised Response Ratio and Privacy Budget}

To proceed further with the proof, we will consider the following theorem from BUDS\cite{b17}:\\

\begin{theorem}\label{thm 1}{(Iterative Shuffling : IS \cite{b17})}
A randomisation function $\mathcal{R^*_S}$ applied by $S\ (S > 1)$ number of shuffler providing iterative shuffling to a data set $X$ with $n$ rows and $g$ attributes, where the data base is divided into $1,2, ..., t$ batches containing $n_1,n_2, ..., n_t$ number of rows respectively, will provide $\epsilon$-differential privacy to the data with privacy budget-
\begin{equation}
    \epsilon = \ln{(RR_\infty)} = \ln{\bigg[\frac{t}{(n_1 -1)^S}\bigg]}
\end{equation}
only when, $n_1 \simeq n_2 \simeq ..... \simeq n_t$.\\ Here $RR_{\infty}$= Randomised Response Ratio, and 
 \begin{equation}
    RR = \frac{P(Response  =  YES \ | \ Truth = YES)}{P(Response = YES \ | \ Truth = NO)}
 \end{equation}
\end{theorem}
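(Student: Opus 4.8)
The plan is to read the iterative shuffler $\mathcal{R}^*_S$ as a randomised response channel and to extract the extremal ratio $RR_\infty$ directly from the permutation combinatorics. First I would fix the semantics of the ``Response $=$ YES'' event: for a target individual whose record lies in one batch, conditioning on ``Truth $=$ YES'' versus ``Truth $=$ NO'' encodes the two neighbouring databases that differ only in that record, and a YES response corresponds to the adversary correctly associating an observed output slot with the target. Under this reading the numerator and denominator of $RR$ become the correct-linkage probabilities under the two adjacent inputs, so that $\epsilon = \ln(RR_\infty)$ follows from the standard randomised-response characterisation of pure differential privacy.

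Next I would carry out the core computation in three stages. For a single shuffler acting on a single batch, conditioning on the target being held fixed leaves $n_1 - 1$ non-target rows permuted uniformly, so the probability of mapping any designated one of them to the distinguished slot is $1/(n_1-1)$. Because the $S$ shufflers act independently across the $S$ attribute groups of a batch, these per-shuffler factors multiply and yield the $(n_1-1)^S$ term. Finally, since the target may be localised to any one of the $t$ batches, aggregating the per-batch contributions produces the linear factor $t$; collecting the pieces gives
\begin{equation}
RR_\infty = \frac{t}{(n_1-1)^S},
\end{equation}
and taking logarithms delivers the claimed budget $\epsilon = \ln\!\big[t/(n_1-1)^S\big]$.

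I would then justify the hypothesis $n_1 \simeq n_2 \simeq \cdots \simeq n_t$. The closed form presupposes that every batch contributes the same factor $1/(n_1-1)^S$; were the batch sizes unequal, the ratio would be dominated by the smallest batch, which is the worst case for privacy, and a single symbol $n_1$ could no longer stand for all of them. The approximate equality of batch sizes is exactly what allows the $t$ batches to be collapsed into one representative factor, so under this condition the bound is both valid and tight, while unequal sizes would only loosen $\epsilon$.

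I expect the main obstacle to be pinning down the ``Response $=$ YES'' event precisely enough that the multiplicative combination across the $S$ independent shufflers and the aggregation across the $t$ batches both follow rigorously; in particular, arguing that independence across attribute groups produces exactly the $S$-th power, and that the repeated (iterative) rounds compose without degrading the per-round ratio, is the delicate step on which the whole computation of $RR_\infty$ rests.
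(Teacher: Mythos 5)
First, a structural point: the paper does not actually prove this theorem, so there is no in-paper derivation for your attempt to be checked against. Theorem~\ref{thm 1} is imported verbatim from BUDS \cite{b17} (``we will consider the following theorem from BUDS''), and the only supporting material in the present paper is Lemma~1, whose appendix proof argues qualitatively --- by analogy with Dwork's coin-toss randomised response --- that the ratio $P(\text{row keeps its own ID})/P(\text{row does not})$ is a legitimate randomised-response ratio for DP purposes. The closed-form $t/(n_1-1)^S$ itself is taken on faith and then reused, group by group, in the proof of Theorem~\ref{thm 2}. Your proposal should therefore be judged as a reconstruction of the missing BUDS derivation, and as such it has concrete gaps.

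The gaps are these. (i) The factor $t$: you claim that ``since the target may be localised to any one of the $t$ batches, aggregating the per-batch contributions produces the linear factor $t$.'' This is a non sequitur. If each of the $t$ batches contributes symmetrically, then summing over batches scales the numerator and the denominator of $RR$ identically, and the ratio is unchanged; a ratio of two probabilities cannot acquire a multiplicative factor $t$ merely because there are $t$ disjoint, exchangeable cases. For $t$ to survive into the final expression you need an asymmetric counting argument (for instance, favourable configurations counted across all batches set against alternatives counted within a single batch), and you never supply one. (ii) The per-shuffler factor $1/(n_1-1)$: under a uniform permutation of a batch of $n_1$ rows, the probability that a designated row lands in a designated slot is $1/n_1$, not $1/(n_1-1)$. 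Your $1/(n_1-1)$ is obtained by ``conditioning on the target being held fixed,'' but if the target is held fixed it links to its own ID with probability $1$, contradicting the very event whose probability you are computing; the $n_1-1$ in the theorem instead counts the other $n_1-1$ rows with which the target can be confused, which is a comparison-based argument different from yours. (iii) Your closing assertions --- that the per-round ratio composes without degradation across the iterative rounds, and that unequal batch sizes ``would only loosen $\epsilon$'' so the equal-size bound is tight --- are stated, not proved, and these are precisely the places where such shuffling arguments are delicate. Since the numerator-versus-denominator semantics of ``Response $=$ YES'' is also left unresolved (you flag it yourself), the derivation as written does not establish the stated budget.
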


\newtheorem{Lemma}{Lemma}
 \begin{Lemma}

 \textit{The Randomised Response Ration of Iterative  Shuffling($RR_{IS}$) \cite{b4} \cite{b5}=}  $\frac{P(Row \ belongs \ to \ its \ own \ unique \ ID)}{P(Row \ does \ not \ belongs \ to \ its \ own \ unique \ ID)} $ \textit{satisfy the condition of Differential Privacy.}
 \end{Lemma}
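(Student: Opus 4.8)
The plan is to reduce the Lemma to Theorem \ref{thm 1} by showing that the unique-ID ratio $RR_{IS}$ in the statement is exactly the Randomised Response Ratio $RR_\infty$ that governs the privacy budget there. First I would pin down the correspondence between the two formulations. In the randomised-response language of Theorem \ref{thm 1}, a \emph{correct} re-association of a shuffled row with the individual it came from plays the role of the event $\{\text{Response}=\text{YES}\mid\text{Truth}=\text{YES}\}$ -- the row \emph{belongs to its own unique ID} -- while an \emph{incorrect} re-association, to some other individual, plays the role of $\{\text{Response}=\text{YES}\mid\text{Truth}=\text{NO}\}$ -- the row \emph{does not belong to its own unique ID}. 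Under this identification the ratio of the Lemma and the ratio $RR$ of Theorem \ref{thm 1} describe the same pair of conditional probabilities, so it suffices to evaluate that pair under the \textit{FIS} mechanism.

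Next I would compute the two probabilities directly from the combinatorics of iterative shuffling. Within a single batch of size $n_1$ a uniform shuffle sends a given row to its own unique ID with probability $1/n_1$ and to one of the other $n_1-1$ positions otherwise, so the one-shuffle odds of a correct versus incorrect association are $1:(n_1-1)$. Because the $S$ shufflers act independently and iteratively on each attribute group, I would argue that the ``does not belong'' branch compounds multiplicatively across the $S$ layers, producing a factor $(n_1-1)^S$, while the batch structure contributes the factor $t$ coming from the $t$ (approximately equal) batches into which the tagged group is split. Assembling these gives
\begin{equation}
RR_{IS} = \frac{P(\text{Row belongs to its own unique ID})}{P(\text{Row does not belong to its own unique ID})} = \frac{t}{(n_1-1)^S} = RR_\infty .
\end{equation}

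Finally, having shown $RR_{IS}=RR_\infty$, I would invoke Theorem \ref{thm 1} to conclude that $\ln(RR_{IS}) = \ln\!\big[t/(n_1-1)^S\big]$ is the finite privacy budget $\epsilon$, so the ratio $RR_{IS}$ meets the $\epsilon$-differential-privacy condition (the per-row output odds on neighbouring inputs are bounded in terms of $e^\epsilon$), which is exactly what the Lemma asserts. The main obstacle I expect is the middle step: justifying rigorously that the $S$ iterated shuffles each contribute an independent multiplicative factor $(n_1-1)$, and that the batch partition contributes the clean factor $t$, rather than some more entangled expression. This is where the hypothesis $n_1 \simeq n_2 \simeq \cdots \simeq n_t$ of Theorem \ref{thm 1} becomes essential, since the equal-batch approximation is what lets the combinatorial counts factor cleanly and makes the per-row association probabilities identical across batches.
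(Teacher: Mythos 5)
Your core move --- identifying ``row belongs to its own unique ID'' with $\{\text{Response}=\text{YES}\mid\text{Truth}=\text{YES}\}$ and ``does not belong'' with $\{\text{Response}=\text{YES}\mid\text{Truth}=\text{NO}\}$ --- is exactly the identification the paper makes, but from there the two proofs part ways. The paper's proof (in the appendix) never computes the ratio and never touches Theorem~\ref{thm 1}: it argues that if the IS randomization interchanges two rows, the input and output datasets differ only in those rows and are therefore neighboring, and that the resulting ratio of changed-to-unchanged probabilities for a single row has the same structure (the paper says ``intention'') as Dwork's coin-toss randomized response ratio; since Dwork established that such a ratio satisfies the DP conditions, the IS ratio does too. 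Your route instead evaluates the ratio combinatorially as $t/(n_1-1)^S$ and then invokes Theorem~\ref{thm 1} to certify that its logarithm is a valid privacy budget.

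Two caveats about your closing steps. First, the combinatorial middle step is redundant: the value $RR_\infty = t/(n_1-1)^S$ is already part of Theorem~\ref{thm 1}'s statement, so you may cite it rather than re-derive it (and the clean factorization you worry about is precisely why that theorem carries the hypothesis $n_1 \simeq n_2 \simeq \cdots \simeq n_t$). Second, and more substantively, leaning on Theorem~\ref{thm 1} for the DP property is circular in spirit: the budget $\epsilon = \ln(RR_\infty)$ in that theorem is itself obtained (in BUDS) by taking the logarithm of this very ratio, and the legitimacy of that step --- that a bounded row-reassignment odds ratio can serve as the $e^{\epsilon}$ of the DP definition --- is exactly what the Lemma is supposed to establish. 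This is why the paper proves the Lemma by appeal to Dwork's randomized response argument, which is logically prior to Theorem~\ref{thm 1}, rather than by appeal to the theorem itself. Since this paper imports Theorem~\ref{thm 1} as a black box from BUDS, your argument does not formally collapse, but it reduces the Lemma to a restatement of part of that theorem, whereas the paper intends it as an independent justification of the RR-to-DP link; given that the paper's own argument is itself informal, your version is not weaker in rigor, only weaker in what it explains.
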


Now, the proposed architecture provides a fair mechanism along with a strong privacy budget to secure the sensitive information of individuals. This algorithm is the non-discriminant of biases regarding different groups based on protected attributes (For example gender, skin color, etc). Now the help of theorem 1, the privacy budget of \textit{FIS} is developed next. 


\begin{theorem}\label{thm 2}\textit{(Fairly Iterative Shuffling- FIS)} 
A randomisation function $\mathcal{R^*}_{iS}$ applied by $iS\ (S > 1)$ number of shuffler providing iterative shuffling to a data set $X$ with $n$ rows and $g$ attributes among which $n_i$ number of individuals are tagged with $i$th protected class ($i=1,2,...n_p$); i.e. there are $n_p$ number of user groups according to their protected class where tagged  groups are divided into $1,2, ..., t_i$  batches containing $n_{i1},n_{i2}, ..., n_{it_i}$ number of rows for $i$th group, will provide $\epsilon$-differential privacy to the data with privacy budget-
\begin{equation}
    \epsilon_{Fair} =\ln\bigg[\frac{1}{n_p !}{\Sigma_{i=1}^{n_p}\frac{t_i}{(n_{it_i} -1)^S}}\bigg]
\end{equation}

only when, $n_{i1} \simeq n_{i2} \simeq ..... \simeq n_{it_i}$; $i = 1,2,..., n_p$
\end{theorem}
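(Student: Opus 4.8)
The plan is to build directly on Theorem \ref{thm 1}, treating the Fairly Iterative Shuffling mechanism as $n_p$ coupled instances of ordinary Iterative Shuffling---one per tagged protected group---and then to combine their contributions into a single Randomised Response Ratio whose logarithm is $\epsilon_{Fair}$. First I would fix an arbitrary target row and, following the Lemma, write the guarantee through $RR_{FIS} = P(\text{row belongs to its own unique ID})/P(\text{row does not belong to its own unique ID})$ evaluated over the full FIS output. The group-tagging step ensures that a row tagged with the $i$th protected class is touched only by the $S$ shufflers assigned to that class, so conditioned on membership in group $i$ the within-group dynamics are exactly the setting of Theorem \ref{thm 1} with $t_i$ batches of size $n_{it_i}$. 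Hence I would invoke Theorem \ref{thm 1} verbatim on each group to obtain the per-group ratio $t_i/(n_{it_i}-1)^S$, valid precisely under the stated near-equal batch-size hypothesis $n_{i1}\simeq\cdots\simeq n_{it_i}$.

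The second step is the aggregation over the $n_p$ groups. Here I would argue that, from the adversary's viewpoint, correctly re-linking a row requires both selecting the right protected group among the $n_p!$ orderings induced by the tagging and then correctly placing the row inside that group: the former contributes the normalising factor $1/n_p!$, while the latter contributes the per-group ratios from the previous step. Because the $n_p$ groups are shuffled by disjoint shuffler sets and their outputs are independent, the total favourable linkage probability decomposes as a sum over groups rather than a product, giving $RR_{FIS} = \frac{1}{n_p!}\sum_{i=1}^{n_p} t_i/(n_{it_i}-1)^S$. Taking logarithms yields the claimed budget $\epsilon_{Fair}=\ln\big[\frac{1}{n_p!}\sum_{i=1}^{n_p} t_i/(n_{it_i}-1)^S\big]$, and the Lemma certifies that this ratio satisfies the differential-privacy condition.

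The hard part will be making the aggregation rigorous---specifically, justifying why the combined ratio is the $n_p!$-normalised \emph{sum} of per-group ratios rather than a maximum (as ordinary parallel composition over disjoint groups would suggest) or a product. The crux is to set up the joint probability space of all $n_p$ independent shuffles together with the group-permutation symmetry introduced by tagging, and to show that the numerator and denominator of $RR_{FIS}$ factor through this symmetry in exactly the way that produces the $1/n_p!$ prefactor. To guard against an error here, I would check the formula on the homogeneous case $t_i\equiv t,\ n_{it_i}\equiv n_1$, where it collapses to $\frac{1}{(n_p-1)!}\cdot t/(n_1-1)^S$; since $(n_p-1)!\ge 1$, this is at most the single-group ratio of Theorem \ref{thm 1}, confirming that adding protected groups strictly amplifies privacy (smaller $\epsilon_{Fair}$)---the qualitative behaviour the fairness-through-tagging design is meant to exhibit.
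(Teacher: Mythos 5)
Your proposal follows essentially the same route as the paper's own proof: invoke Theorem~\ref{thm 1} separately on each tagged group (valid since each group is shuffled by its own disjoint set of shufflers with near-equal batch sizes) to get the per-group ratio $t_i/(n_{it_i}-1)^S$, then combine these into $RR_\infty = \frac{1}{n_p!}\sum_{i=1}^{n_p} t_i/(n_{it_i}-1)^S$ and take the logarithm. The aggregation step you flag as the hard part is in fact simply asserted in the paper without any justification, so your heuristic for the $1/n_p!$ prefactor and your homogeneous-case sanity check go, if anything, slightly beyond what the paper itself provides.
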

\begin{proof}
The records in the group of attributes for the different tagged groups go for independent shuffling separately. According to the theorem \ref{thm 1} \cite{b17} we can say that, For the records tagged with $i$th group, 
\begin{align}
    RR_i &= \frac{P(Row \ belongs \ to \ its \ own \ unique \ ID\ tagged\ with \ ith \ group)}{P(Row \ does \ not \ belongs \ to \ its \ own \ unique \ ID\  tagged\ with \ ith\ group)}\\& =\frac{t_i}{(n_{it_i} - 1)^S}
\end{align}

as,$n_{i1} \simeq n_{i2} \simeq ..... \simeq n_{it_i}$; $i = 1,2,..., n_p$.
So, For the whole architecture the fair randomized response will be:
\begin{equation}
    RR_\infty =\frac{1}{n_p !}\bigg[\frac{t_1}{(n_{1t_1} - 1)^S} + \frac{t_2}{(n_{2t_2} - 1)^S} +...+ \frac{t_{n_p}}{(n_{n_pt_{n_p}} - 1)^S}\bigg] 
\end{equation}
   Now the privacy budget for this proposed architecture is: 
   \begin{equation}
    \epsilon_{Fair} =\ln\frac{1}{n_p !}{\bigg[\frac{t_1}{(n_{1t_1} -1)^S} + \frac{t_2}{(n_{2t_2}-1)^S} + ...+\frac{t_{n_p}}{(n_{n_pt_{n_P}} -1)^S}\bigg]}
\end{equation}
That means,
\begin{equation}
    \epsilon_{Fair} =\ln\bigg[\frac{1}{n_p !}{\Sigma_{i=1}^{n_p}\frac{t_i}{(n_{it_i} -1)^S}}\bigg]
\end{equation}

\end{proof}

In our case study, we focuses on the decision problem of gender unbiasedness between male and female groups for allowing them to take admission in a particular course (considering example 2). For this case, the $FIS$ posses a privacy budget of :
\begin{equation}
    \epsilon = \ln\frac{1}{2}{\bigg[\frac{t_F}{(n_{1t_F} -1)^S} + \frac{t_M}{(n_{2t_M}-1)^S}\bigg]}
\end{equation}{}

Where, $S>1$, $n_1$is the number of female in database, $n_2 = (n-n_1)$ is number of male in database,  $t_F$ is the number of batches in female group, $t_M$ is the number of batches in male group, $n_{1t_F}$ is the average female batch size, $n_{2t_M}$ is average male batch size and $\epsilon_{Fair}$ is the privacy parameter of this mechanism. The small value of $\epsilon_{Fair}$ refers to a better privacy guarantee.

\subsection{Fair Classifiers and Relation Impact}\label{Fair}
After $FIS$, we introduce the notion of \textit{Relation Impact} that helps the classifiers to take a fair decision. This mechanism applies fair classifiers that use the principle of Relation Impact to avoid the miss classification error and able to create unbiased classes or groups to consider for giving chance in an event.
\begin{Definition}{Relation Impact: }
The aim of a set of classifiers is to study a classification function $\hat h_c:\mathcal{X}\longrightarrow\mathcal{Y}$, defined in a hypothesis space $\mathcal{H}$ where $\hat h_c$ minimizes some aimed loss function to reduce miss-classification error based on the related background attribute class $\mathcal{A}=(a_1, a_2, ...., a_{m})$ :
\newcommand{\argmin}{\operatornamewithlimits{argmin}}
  \begin{equation}
    \hat h_c:= \argmin_{h_c\in \mathcal{H}}E_{(\mathcal{X},\mathcal{Y}) \sim P}L[{h_c(\mathcal{X} | \mathcal{A}), \mathcal{Y}}]  
  \end{equation}
The Efficiency of the classifier will be estimated using testing dataset $\mathcal{D}_{tst} = {{(x_{j_t},y_{j_t})}^{n_t}}_{j=1 (1) n_t}$,based on the observation that  how accurate the predicted labels $\hat h_c(x_{j_t}| \mathcal{A})$’s, are corresponds to the true labels $y_j$’s. 
\end{Definition}

\par Therefore, it means the decision regarding the relevant persons or groups is taken according to their due based on all the highly related factors to that particular event. This implies that distribution will not possess numerical equality whereas it will be influenced by the relation distance of important background factors and consider the individuals according to their rightful needs. For example, in the classroom, this might mean the teacher spending more time with male students at night classes rather than female students because most of the time female students do not choose to take the night classes for safety issues. Not only that, the students (irrespective of gender) who came from the weaker economical background cannot always participate in the extra classes or cannot afford any extra books or study material. So, it is clear that, in the admission system, if someone wants to choose the students fairly, not only the sex group matters, but also the economical background, and other factors like appropriate subject scores, living locations, electricity facility, etc also matters. So, these types of factors which are directly and indirectly highly related to the final exam score of the students should be taken to account. The notion of \textit{Relation Impact} exactly does that, where it trains the model by considering all the highly related attributes for taking the decision and minimizes the expected error of the classification by \textit{Equalizing Minimal FPR-FNR} of different groups. \\

In our work, the idea is to provide a fair decision with the help of an unbiased classification based on the correlation of the attributes to the required event. The mechanism first calculates the correlation matrix and any attributes which have the higher (usually greater than 0.5 or less than -0.5)  positive and negative correlation to the required events are only taken to be accountable for the final decision-making procedure. Here the classifiers use the Relation Impact where the main aim is to build a model that can minimize the expected decision loss to reach the maximum accuracy. The classifiers attend that minimum loss, i.e minimum miss classification error, by achieving the Minimal Equalised FPR-FNR of the different groups; FPR: False Positive Rate, FNR: False Negative Rate. \\


\begin{Definition}{  Equalized Minimal FPR-FNR: }
Let's have a set of classifiers $\hat{h_c} \{h_{c1}, h_{c2}, ...h_{ci}\}: \mathcal{X}\rightarrow \mathcal{Y}$ based on the hypothesis space $\mathcal{H}$ where $i= 1, 2.. n_p$ denotes the number of groups, and $\mathcal{A}=\{A_1, A_2, ... A_{m}\}$ is the set of  related background attributes, then the classifier is said to satisfy the condition of  Equalised Minimal FPR-FNR ratio if :
\newcommand{\argmin}{\operatornamewithlimits{argmin}}
\begin{equation}
    \argmin_{h_c \epsilon \mathcal{H}}FPR_{1} (\mathcal{A}) = \argmin_{h_c \epsilon \mathcal{H}} FPR_{2} (\mathcal{A}) =..........= \argmin_{h_c \epsilon \mathcal{H}}FPR_{n_p} (\mathcal{A})
\end{equation}
and, 
\begin{equation}
    \argmin_{h_c \epsilon \mathcal{H}}FNR_{1} (\mathcal{A}) = \argmin_{h_c \epsilon \mathcal{H}} FNR_{2} (\mathcal{A}) =..........= \argmin_{h_c \epsilon \mathcal{H}}FNR_{n_p} (\mathcal{A})
\end{equation}
Here,  $\hat{h_c}(\mathcal{X})$ is the predicted label by the classifier, $\mathcal{Y}$ is the true label  and both of them take the label value either $y_1$ or $Y_0$ where $y_1$ denotes the positive label, $y_0$ denotes the negative label and 
\begin{equation}
    FPR_{i} (\mathcal{A})= P_i (\hat{h_c}(\mathcal{X})= y_1 | \mathcal{Y} = y_0, \mathcal{A} )
\end{equation}
 and 
 \begin{equation}
     FNR_{i} (\mathcal{A})= P_i (\hat{h_c}(\mathcal{X})= y_0 | \mathcal{Y} = y_1, \mathcal{A} )
 \end{equation}
\end{Definition}
For our mentioned problem in example $2$, the classifier will achieve the higher accuracy for predicting the positive and negative subgroups for both the gender groups (Male and Female), by minimizing the target loss when , 
\newcommand{\argmin}{\operatornamewithlimits{argmin}}
\begin{equation}
    \argmin_{h_c \epsilon \mathcal{H}}FPR_{F} (\mathcal{A}) = \argmin_{h_c \epsilon \mathcal{H}} FPR_{M} (\mathcal{A})
\end{equation}
and 
\begin{equation}
     \argmin_{h_c \epsilon \mathcal{H}}FNR_{F} (\mathcal{A}) = \argmin_{h_c \epsilon \mathcal{H}} FNR_{M} (\mathcal{A})
\end{equation}
\newtheorem{Remark}{Remark}
  \begin{Remark}{ FIS does not hamper the Equalised Minimal FPR-FNR (Detailed discussion is in appendix)}
  
  \end{Remark}

\subsection{Boosting Utility Through Risk Minimization } \label{risk}
 The real information which can be gained from the data by a particular query or set of queries is defined as the utility \cite{b27} of the system. The goal of this section is to discuss the gained utility from the data by applying Fair-BUDS and provide a tight bound for loss function between input and output average-count. This bound has a great impact on the utility of the system and this depends on the previously obtained privacy budget. Before obtaining the final result, an optimization function with a risk assessment technique\cite{b17} is applied to get the maximum utility from the data. 
 Let's assume, before $FIS$ shuffling, the true data set will give a average count report for a time horizon $[d] = \{1,..d\}$ which is denoted as $\mathcal{C_T} = \sum_{i=1}^{n}\sum_{j\epsilon \mathcal{Q} }\sum_{T\epsilon [d]} x_{ij}[T]$; where, $x_{ij}$ is the jth record of the ith individual from the true data set for the time horizon $[d]$ and $\mathcal{Q} =\{ a_1, a_2, ...\}$ is the set of attributes given by the query function for a particular set query or set of queries. Now, $\mathcal{C_{FS}} = \sum_{i=1}^{n}\sum_{j\epsilon \mathcal{Q} }\sum_{T\epsilon[d]} x^{FS}_{ij}[d]$; where $x^{FS}_{ij}$ is the jth record of the ith individual from the $FIS$ data set for the time horizon $[d]$. The calculated loss function \cite{b17} regarding the input and output count will be:
 \begin{equation}
    \mathcal{L}(\mathcal{C_T},\mathcal{C_{FS}}) \leq \mathcal{C_{FS}} \times  \bigg|e^{\ln\bigg[\frac{1}{n_p !}{\Sigma_{i=1}^{n_p}\frac{t_i}{(n_{it_i} -1)^S}}\bigg]} - 1\bigg|
\end{equation}
When, $e^{\epsilon} \rightarrow 0$, the Utility $\mathcal{U}(\mathcal{C_T},\mathcal{C_{FS}} | \mathcal{X},\mathcal{Y}) \rightarrow 1$ as $\mathcal{L}(\mathcal{C_T},\mathcal{C_{FS}}) \rightarrow 0$. 
 Now the aim of this architecture is not only to be fair and secured but also it should pose a high utility guarantee which can be obtained by risk-minimizing assessment \cite{b17}. Proceeding with risk minimization technique, a mechanism $\mathcal{R^*}(iS)$ can be found (when there is $i$ number of tagged groups) which minimizes the risk \cite{b17}, where

\begin{equation}
    \mathcal{R^*}(iS)= \argmin_\text{$\mathcal{R^*}(iS) \epsilon \mathcal{H}$} Risk_\text{(Ep)} (\mathcal{R^*}(iS))
\end{equation}
\section{Conclusion And Future Scope}

The Proposed architecture provides group fairness to the user based on their protected attribute class while giving a strong privacy guarantee to their sensitive attribute. The Risk Minimizer amplifies the utility Guarantee to the system which makes the algorithm possess an optimal balance between privacy-utility-fairness. This balanced architecture full-filled user's top priorities related to fairness and Privacy in the Online Service system. Though the mechanism shows good performance in privacy-utility trade-off for all kinds of Queries and generates unbiased classification for decision-making problems, the use of One-Hot encoding is one kind of constraint for big data analysis. Trying other encoding option (which does not depend on data dimension) is our future target. On the other hand, the work performance and both the privacy-utility upper bounds are given on theoretical aspects only. So, various experiments with different benchmark datasets for both online and offline settings are also in our plan.


\newpage
\appendix
\section{Appendix}
 \subsection{Proof of Lemma 1 in Section 3}
 \begin{proof}
 According to the concept of Differential Privacy \cite{b5} \cite{b4}, the generated report from two neighboring datasets will be almost the same.Dwork has already shown in her privacy book that the randomized response ratio i.e; after applying the randomized function to the input dataset the ratio between changed and unchanged rows of two neighboring datasets (input and output datasets) satisfies all the notion of Differential Privacy. Dwork considered the coin toss mechanism as the randomize function which was implemented on the input dataset and the ratio between probabilities to change or unchanged of a particular response by the occurring of heads and tails, was denoted as the randomized response ratio that not only helps to derive the privacy budget but also satisfies the condition of DP, where ;
 \begin{equation}
    RR = \frac{P(Response  =  YES \ | \ Truth = YES)}{P(Response = YES \ | \ Truth = NO)}
 \end{equation}
 Now, the main idea is to prove that the intentions of  Dwork's Randomised response ratio and the IS Randomised response Ratio(RR) are almost the same. By proving this, it can be said that IS Randomised response Ratio also satisfies all the notion of DP \cite{b5} \cite{b4}. \\
 In this work, the randomized function uses the Fairly Iterative Shuffling algorithm to make secured and unbiased reports from the input dataset. For this, if we can prove that the Iterative Shuffling\cite{b17} randomized technique itself reassures the condition of DP; it will work for a Fair shuffle algorithm. For sake of simplicity, let's assume the Randomised function with IS algorithm is applied on the input data set and it only interchange two rows of the input data set. Then it is obvious that the input and output datasets now only differ from two rows and it can be said that these two datasets are neighboring. Then after applying IS, the ratio between the probabilities of change and unchanged of a single row must be denoted as the randomized response ratio which has the same intention as the coin toss randomized response ratio has. That means This is also a ratio between the probabilities of changed and unchanged reply after implementation of the randomized function. Then, surely, IS randomized Response Technique must satisfy all the notions of DP. \\
 \end{proof}
 \subsection{Remark 1 in Section 4 - Discussion}
 $FIS$ shuffles the data set in a way that does not change the position of the related background factors of the individuals which are used to decide to choose an individual in an event. Before \textit{FIS}, query function is applied and Correlation Matrix is calculated which helps to extract the related attributes to an event and tied them up together to behave like a single attribute. Now if we only consider these tied up attributes, that can be denoted as the sub-database, Let, assume the input database is $\mathcal{X}$ and $x\subset \mathcal{X}$ is the generated subset by query function and Correlation Function which contains all the related background factors/ attributes along with their tagged groups (In our case study, gender groups: male and female). Now, FIS does not change these related attributes record positions after shuffling,i.e the related attributes records belong to their user-id corresponds to their tagged group. Whereas, \textit{FIS} changes the record position of unrelated attributes only which prevents the linkage attack and minimizes the exposure of data attributes. So, after $FIS$, the individual posses all the true records of related attributes and own tagged group for consideration, whereas it posses all the false and changed records of those attributes which are not related to taking the decision. So, the sub-database $x$ does not change after $FIS$ and that helps the fair classifiers to take the decision correctly, The fair classifiers classify the data based on the sub-database $x$ and their corresponding tagged groups. As \textit{FIS} does not change $x$ and tagged group of individuals, it doesn't hamper the FPR-FNR of the different groups. 
 \subsection{Utility Guarantee and Risk Minimizer in Section 5}
 The real information which can be gained from the data by a particular query or set of queries is defined as the utility of the system. The goal of this section is to discuss the gained utility from the data by applying Fair-BUDS and provide a tight bound for loss function between input and output average-count. This bound has a great impact on the Utility of the system and this depends on the previously obtained privacy budget. Before obtaining the final result, an optimization function with a risk assessment technique is applied to get the maximum Utility from the data. All the necessary calculation regarding this is done in this section. At the end of this section, an optimum upper bound of the loss function can be achieved by which the utility will be highest while having a strong privacy budget.\\
 we have $n$ number of individual's records tagged with two separate groups based on their protected attribute class $n_i; i=i(1)n_p$ rows for $ith$ class. Now the proposed architecture generates a fairly iterative shuffling ($FIS$) method to produce the desire reports. This method, not only provides a strong privacy guarantee but also gives higher utility measures. To, prove this, the following calculations will show, how the true data average count reports are very close to the $FIS$ data average count. Let's assume, before $FIS$ shuffling, the true data set will give a average count report for a time horizon $[d] = \{1,..d\}$ which is denoted as $\mathcal{C_T} = \sum_{i=1}^{n}\sum_{j\epsilon \mathcal{Q} }\sum_{T\epsilon [d]} x_{ij}[T]$; where, $x_{ij}$ is the jth record of the ith individual from the true data set for the time horizon $[d]$ and $\mathcal{Q} =\{ a_1, a_2, ...\}$ is the set of attributes given by the query function for a particular set query or set of queries. Now, $\mathcal{C_{FS}} = \sum_{i=1}^{n}\sum_{j\epsilon \mathcal{Q} }\sum_{T\epsilon[d]} x^{FS}_{ij}[d]$; where $x^{FS}_{ij}$ is the jth record of the ith individual from the $FIS$ data set for the time horizon $[d]$. For the query in \textbf{Example 2}, we can take a attribute set $\mathcal{A}= \{a_1, ..., a_m \} $ that is highly related to the admission decisions. This set not only contains the "Maths Score" but also has other related factors like "Annual Income", "Electricity Facility" etc. All the attributes belonging to set $\mathcal{A}$ is the eligibility criteria for the students and these attributes will be tied up together to present a single attribute. Every individual from the data set will be tagged with their gender group to keep a fair balance between male and female. After $FIS$, the average count will show how many students (irrespective of gender) are eligible for admission. The fairness algorithm will be applied after that to choose the deserving students for admission based on background factors by keeping fairness among gender groups. At the time of $FIS$, the important attributes stay together always and do not shuffle among their shelves as they behave like single attributes. On the other hand, each row of a particular batch from the tagged gender groups gets shuffled and at the end records of their individuals do not belong to their unique ID. But, as the only source of noise is one-hot encoding here which is minimum, and also the tied attribute records do not break their tie at the time of shuffling it can be said that the shuffled dataset must be a neighboring dataset of the true one. Using this agenda, it also can be said that the difference between $\mathcal{C_T}$ and $\mathcal{C_{FS}}$ is negligible and these two average count results belong to the neighborhood of each other;i.e. so close to each other. By the idea of the neighboring data set, we can express the relation between two average count reports by the following bound for generalized case:
 \begin{equation}
    \mathcal{C_T} \leq e^\text{$\epsilon_{Fair}$} \mathcal{C_{FS}}
\end{equation}
where $\epsilon_{Fair}$ is the proposed privacy budget. 
when $\epsilon_{Fair}$ = 0 $\implies$ $\mathcal{C_T} = \mathcal{C_{FS}}$; i.e. The utility $\mathcal{U}(X,X^{FS})$ is maximum; where $X$ is the true data set or input dataset and $X^{FS}$ is the $FIS$ dataset or output dataset. If the range of utility is taken as $[0,1]$; when $\epsilon_{Fair}$ = 0 then  $\mathcal{U}(X,X^{FS}) = 1$. Here, 
\begin{equation}\label{a}
    \mathcal{C_T} \leq e^{\ln\bigg[\frac{1}{n_p !}{\Sigma_{i=1}^{n_p}\frac{t_i}{(n_{it_i} -1)^S}}\bigg]} \mathcal{C_{FS}}
\end{equation}
By subtracting $\mathcal{C_{FS}}$ from both sides and taking absolute value of the equation \ref{a} we get.
\begin{equation}
    |\mathcal{C_T} - \mathcal{C_{FS}}| \leq \mathcal{C_{FS}} \times  \bigg|e^{\ln\bigg[\frac{1}{n_p !}{\Sigma_{i=1}^{n_p}\frac{t_i}{(n_{it_i} -1)^S}}\bigg]} - 1\bigg|
\end{equation}
In our case study based on gender unbiasedness problem, for the Given query in \textit{Example 2} we will have :
\begin{equation}
    |\mathcal{C_T} - \mathcal{C_{FS}}| \leq \mathcal{C_{FS}} \times  \bigg|e^{\ln{\frac{1}{2}\bigg[\frac{t_F}{(n_{1t_F} - 1)^S} + \frac{t_M}{(n_{2t_M} - 1)^S}\bigg] }} - 1\bigg|
\end{equation}
Where, $S>1$, $t_F$ is the number of batches in the female group, $t_M$ is the number of batches in the male group, $n_{1t_F}$ is the average female batch size, and  $n_{2t_M}$ is average male batch size.
Now, define the loss function $\mathcal{L}(\mathcal{C_T},\mathcal{C_{FS}})$ = $|\mathcal{C_T} - \mathcal{C_{FS}}|$ and get bound on loss function as following:
\begin{equation}
    \mathcal{L}(\mathcal{C_T},\mathcal{C_{FS}}) \leq \mathcal{C_{FS}} \times  \bigg|e^{\ln{\frac{1}{2}\bigg[\frac{t_F}{(n_{1t_F} - 1)^S} + \frac{t_M}{(n_{2t_M} - 1)^S}\bigg] }} - 1\bigg|
\end{equation}
Now the aim of this architecture is not only to be fair and secured but also it should pose a high utility guarantee which can be obtained by risk-minimizing assessment. So the next target is to derive the risk function and minimize it to achieve the desired goal. \\

 Here a hypothesis class $\mathcal{H}$ must exist containing all possible Randomisation function which we are searching for. According to the concept of decision theory, the main idea is to minimizing the risk function to find the best randomization function that will help us to map best from input to output.
Here the risk function can be stated as:\\
\begin{equation}
    Risk(\mathcal{R^*}(iS)) = E[\mathcal{L}(\mathcal{C_T},\mathcal{C_{FS}})]= \int \int P(\mathcal{C_T},\mathcal{C_{FS}}) \mathcal{L}(\mathcal{C_T},\mathcal{C_{FS}}) \ d\mathcal{C_T} \ d\mathcal{C_{FS}}
\end{equation}
Where , $i$= Number of variety in protected attribute class, $P(\mathcal{C_T},\mathcal{C_{FS}})$ is the sample dataset distribution which contain $n$ data points that are drawn randomly from a population that follows the distribution $\mu(\mathcal{Z})$ over $\mathcal{Z} = \mathcal{C_T} . \mathcal{C_{FS}} : (\mathcal{C}_{\mathcal{T}1}  , \mathcal{C}_{\mathcal{T}1} ), (\mathcal{C}_{\mathcal{T}2}  , \mathcal{C}_{\mathcal{FT}2}), ...., (\mathcal{C}_{\mathcal{T}n} , \mathcal{C}_{\mathcal{FS}n})$ and 
\begin{equation}
  P(\mathcal{C_T},\mathcal{C_{FS}}) = P(\mathcal{C_{FS}}|\mathcal{C_T}) . P(\mathcal{C_T})
\end{equation}
Here the empirical risk function is: 
\begin{equation}
    Risk_\text{(Ep)} (\mathcal{R^*}(iS)) = \frac{1}{n} \Sigma_{1}^{n} \mathcal{L}(\mathcal{C}_{\mathcal{T}i},\mathcal{C}_{\mathcal{FS}i}) \leq  \frac{1}{n} \Sigma_{1}^{n} e^\text{$\epsilon_{Fair}$}  \mathcal{C}_{\mathcal{FS}i}
\end{equation}
Now a regularization parameter $\mathbf{G}$ will be added in order to impose the complexity penalty on the loss function and prevent over fitting in the following way:
\begin{equation}
     Risk_\text{(Ep)} (\mathcal{R^*}(iS)) =\frac{1}{n} \Sigma_{1}^{n} \mathcal{L}(\mathcal{C}_{\mathcal{T}i},\mathcal{C}_{\mathcal{FS}i}) + \lambda \mathbf{G}(\mathcal{R^*}(iS))
\end{equation}
\begin{equation}
      Risk_\text{(Ep)} (\mathcal{R^*}(iS)) \leq  \frac{1}{n} \Sigma_{1}^{n} e^\text{$\epsilon_{Fair}$} \mathcal{C}_{\mathcal{FS}i} + {\lambda} \mathbf{G}(\mathcal{R^*}(iS))
\end{equation}
where ${\lambda}$ manages the strength of complexity penalty. Now the mechanism $\mathcal{R^*}(iS)$ can be found which minimizes the risk, where

\begin{equation}
    \mathcal{R^*}(iS)= \argmin_\text{$\mathcal{R^*}(2S) \epsilon \mathcal{H}$} Risk_\text{(Ep)} (\mathcal{R^*}(iS))
\end{equation}
For our case study the equation will be :

\begin{equation}
    \mathcal{R^*}(iS)= \argmin_\text{$\mathcal{R^*}(2S) \epsilon \mathcal{H}$} Risk_\text{(Ep)} (\mathcal{R^*}(iS))
\end{equation}
As there are $2$ types of a protected class in related attributes (Gender): Female and Male.


\end{document}